\newcommand{\TB}{T_{\text{B}}}
\newcommand{\TSPQR}{T_{\text{SPQR}}}
\newcommand{\CPP}{C\nolinebreak\hspace{-.05em}\raisebox{.6ex}{\tiny\bf++}\nolinebreak\hspace{-.10em}\xspace}
\newcommand{\heinztwo}{Heinz 2.0\xspace}
\newcommand{\ACTMOD}{\texttt{ACTMOD}\xspace}
\newcommand{\JMPALM}{\texttt{JMP\_ALM}\xspace}
\newcommand{\JMP}{\texttt{JMP}\xspace}
\newcommand{\CRR}{\texttt{CRR}\xspace}
\newcommand{\PUCNU}{\texttt{PUCNU}\xspace}
\newcommand{\isixhundertforty}{\texttt{i640}\xspace}
\newcommand{\Hinstances}{\texttt{H}\xspace}
\newcommand{\HTwoinstances}{\texttt{H2}\xspace}
\newcommand{\RANDOM}{\texttt{RANDOM}\xspace}
\begin{document}
\pagestyle{headings}  
\mainmatter              
\title{Solving the Maximum-Weight Connected Subgraph Problem to Optimality}
\titlerunning{Solving the Maximum-Weight Connected Subgraph Problem to Optimality}  
%
\author{Mohammed El-Kebir\inst{1,2,3} \and Gunnar W.\ Klau\inst{2,3}
}
\authorrunning{Mohammed El-Kebir and Gunnar W.\ Klau} 
%
%
\institute{$^1$~Department of Computer Science and Center for Computational
  Molecular Biology, Brown University, Providence,
  RI, 02906, USA\\
  $^2$ Life Sciences group, Centrum Wiskunde \& Informatica (CWI), Science Park 123, 1098 XG
  Amsterdam, the Netherlands\\
  $^3$ Centre for Integrative Bioinformatics VU (IBIVU), VU University
  Amsterdam, De Boelelaan 1081A, 1081 HV Amsterdam, the Netherlands\\
  \email{melkebir@cs.brown.edu, gunnar.klau@cwi.nl}}

\maketitle              

\begin{abstract}
Given an undirected node-weighted graph, the Maximum-Weight Connected Subgraph
problem (MWCS) is to identify a subset of nodes of maximal
sum of weights that
induce a connected subgraph. MWCS is closely related to the
well-studied Prize-Collecting Steiner Tree
problem and has many applications in different areas, including
computational biology, network design and computer vision. The problem is
NP-hard and even hard to approximate within a
constant factor. In this work we describe an algorithmic scheme for
solving MWCS to provable optimality, which is based on preprocessing
rules, new results on
decomposing an instance into its biconnected and triconnected components and
a branch-and-cut approach combined with a primal heuristic. We demonstrate the
performance of our
method on the benchmark instances of the 11th DIMACS implementation challenge
consisting of MWCS as well as transformed PCST instances.
\keywords{maximum-weight connected subgraph, algorithm engineering,
  divide-and-conquer, SPQR tree, prize-collecting Steiner tree, branch-and-cut}
\end{abstract}

\section{Introduction}
\label{sec:introduction}
We consider the Maximum-Weight Connected Subgraph
problem (MWCS). Giv\-en an undirected node-weighted graph, the task is to
find a subset of nodes of maximal sum of weights that
induce a connected subgraph. A formal definition of the unrooted and rooted
variant is as follows.

\begin{definition}[MWCS]
Given an undirected graph $G = (V, E)$ with node weights $w: V \to
\mathbb{R}$, find a subset $V^* \subseteq V$ such that the induced
graph $G[V^*] := \left(V^*, E \cap {{V^*} \choose 2}\right)$ is connected and the
weight $w(G[V^*]) := \sum\limits_{v \in V^*}w(v)$ is maximal. 
\end{definition}

\begin{definition}[R-MWCS]
Given an undirected graph $G = (V, E)$, a node set $R \subseteq V$ and node weights $w: V \to
\mathbb{R}$, find a subset $V^* \subseteq V$ such that $R \subseteq V^*$, the induced
graph $G[V^*] := \left(V^*, E \cap {{V^*} \choose 2}\right)$ is connected and the
weight $w(G[V^*]) := \sum\limits_{v \in V^*}w(v)$ is maximal. 
\end{definition}


Johnson mentioned MWCS in his NP-completeness column~\cite{Johnson:1985cz}. The
problem and its cardinality-constrained and budget variants have 
numerous important applications in different areas, including designing
fiber-optic networks \cite{Lee:1998uk},
oil-drilling \cite{Hochbaum:1994hj}, systems biology
\cite{Yamamoto:2009bc,Dittrich:2008hf,Backes:2012us}, wildlife
corridor design \cite{Dilkina:2010wh}, computer vision
\cite{ChaoYehChen:2012jh} and forest planning \cite{Carvajal:2013tc}.

The maximum-weight connected subgraph problem is closely related to
the well-studied Prize-Collecting Steiner Tree problem (PCST) \cite{Johnson:2000vs,lwpkmf:pcst:2006}, which is
defined as follows.

\begin{definition}[PCST]
Given an undirected graph $G = (V, E)$ with node profits $p: V \to
\mathbb{R}_{\geq 0}$ and edge costs $c: E \to \mathbb{R}_{\geq 0}$, find a
connected subgraph $T = (V^*, E^*)$ of $G$
such that $p(T) := \sum\limits_{v \in V^*}p(v) -
\sum\limits_{e \in E^*}c(e)$ is maximal.
\end{definition}

In \cite{Dittrich:2008hf} we described a reduction from MWCS to
PCST and showed that a prize-collecting Steiner tree $T$ in the
transformed instance is a connected subgraph in the original instance
with weight $p(T) - w'$, where $w'$ is the minimum weight of a
node. We also gave a simple approximation-preserving reduction from PCST to
MWCS: Given an instance $(G = (V, E), p, c)$ of PCST, the corresponding
instance $(G', w)$ of MWCS is obtained by splitting each edge $(v,w)$ in $E$
into two edges $(v, u)$ and $(u, w)$, and setting the weight $w(u)$ of the
introduced split vertex $u$ to $-c(e)$.

\begin{theorem}\label{thm:1}
  A maximum-weight connected subgraph $T'$ in the transformed instance
  corresponds to an optimal prize-collecting Steiner tree $T$ in the
  original instance, and $w(T') = p(T)$.  
\end{theorem}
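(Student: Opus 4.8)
The plan is to establish a value-preserving correspondence between connected subgraphs of the original graph $G$ and suitably normalized connected subgraphs of the transformed graph $G'$, and then to read off equality of the two optima. The two structural facts that drive everything are that $G'$ is exactly the graph obtained from $G$ by subdividing every edge once, so that each split vertex $u_e$ has degree exactly $2$ in $G'$ (its neighbours being the two endpoints of $e$), and that each split vertex carries nonpositive weight $w(u_e) = -c(e) \le 0$ while each original vertex retains $w(v) = p(v) \ge 0$.

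First I would treat the easy direction, from PCST to MWCS. Given any connected subgraph $T = (V^*, E^*)$ of $G$, set $\widehat V = V^* \cup \{u_e : e \in E^*\}$. Since the endpoints of every $e \in E^*$ lie in $V^*$, the induced subgraph $G'[\widehat V]$ contains both edges $\{v, u_e\}$ and $\{u_e, w\}$ for each $e = (v,w) \in E^*$ and, as $G'$ has no edge joining two original vertices or two split vertices, nothing else; hence $G'[\widehat V]$ is precisely the edge-subdivision of $T$ and is therefore connected. A direct summation gives $w(G'[\widehat V]) = \sum_{v \in V^*} p(v) - \sum_{e \in E^*} c(e) = p(T)$, so the MWCS optimum is at least the PCST optimum.

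The converse direction is the main obstacle, because an MWCS solution is only required to induce a connected subgraph and may therefore contain a split vertex $u_e$ as a leaf, i.e.\ with only one endpoint of $e$ selected; such a configuration has no PCST counterpart. I would remove this slack by a normalization step: starting from an optimal $V^*$ for MWCS, repeatedly delete any split vertex that is a leaf in the current induced subgraph. Deleting a leaf preserves connectivity and, since $w(u_e) \le 0$, cannot decrease the weight; moreover deleting a split vertex does not alter the presence of any original vertex, so it creates no new leaf split vertices and the process terminates at a set $\widetilde V$ with $w(G'[\widetilde V]) = w(G'[V^*])$. In $\widetilde V$ every present split vertex has degree $2$, hence both of its endpoints are selected, so $G'[\widetilde V]$ is the subdivision of the subgraph $T := (\widetilde V \cap V, \{e : u_e \in \widetilde V\})$ of $G$; the subdivision being connected forces $T$ to be connected, and the same summation as above yields $p(T) = w(G'[\widetilde V])$. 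This shows the PCST optimum is at least the MWCS optimum.

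Combining the two inequalities gives equality of the optimal values, and the explicit subdivision and contraction maps exhibit the claimed correspondence between an optimal $T'$ and an optimal $T$ with $w(T') = p(T)$. The only remaining care is for degenerate optima; since every $p(v) \ge 0$ and every $c(e) \ge 0$, the common optimum is nonnegative and is attained either by a single original vertex or by a subgraph containing at least one original vertex, so the normalization never collapses to an isolated split vertex and the boundary cases are immediate.
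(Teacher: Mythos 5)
Your proposal is correct and follows essentially the same route as the paper's proof: observe that split vertices in a (normalized) optimal MWCS solution must have both endpoints present, contract them back to edges, and conclude optimality from the value-preserving forward map. Your version is somewhat more careful than the paper's — in particular, your explicit leaf-deletion normalization correctly handles the case $c(e)=0$, where the paper's claim that $T'\setminus\{u\}$ would be strictly \emph{better} does not quite hold — but this is a refinement of the same argument rather than a different approach.
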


\begin{proof}
  We first observe that if a split vertex $u$ is part of $T'$, then
  also its neighbors $v$ and $w$ must be in $T'$, otherwise $T'
  \setminus \{u\}$ would be a better solution. We then can simply map each split
  vertex back to its original edge. The solution clearly has profit $p(T) = w(T')$ and is optimal,
  because a more profitable subgraph with respect to $p$ would also correspond to a
  higher-scoring subgraph with respect to $w$, contradicting the
  optimality of $T'$. \qed
\end{proof}

These reductions directly imply and simplify a number of results for
MWCS\@. For example, it follows from \cite{Feigenbaum:2000jf} and Theorem~\ref{thm:1}
that MWCS is NP-hard and even hard to approximate within a constant
factor. In addition, the results in \cite{Bateni:2011vi} provide a polynomial-time
exact algorithm for MWCS for graphs of bounded treewidth.

%

In \cite{Dittrich:2008hf} we used the close relation to
PCST to develop an exact algorithm for MWCS by running the branch-and-cut approach of Ljubic et al.\
\cite{AlvarezMiranda:2013ew} on the transformed instance. Backes et
al.\ \cite{Backes:2012us} presented a direct integer linear programming
formulation for a variant of MWCS based only on node
variables. {\'A}lvarez-Miranda et al.\ \cite{AlvarezMiranda:2013ew}
recently introduced a stronger formulation based on the concept of
node-separators. 

Here, we introduce an algorithm engineering approach that combines
existing and new results to solve MWCS instances
efficiently in practice to provable optimality. We describe new and
adapted preprocessing rules in Section~\ref{sec:prepr}.
Section~\ref{sec:divide-and-conquer-scheme} is dedicated to an overall
divide-and-conquer scheme, which is based on novel results on
decomposing an instance into its biconnected and triconnected components.
In Section~\ref{sec:branch_and_cut} we describe a branch-and-cut approach using
a new primal heuristic based on an exact dynamic programming algorithm
for trees.  We
demonstrate in
Section~\ref{sec:results} the performance of our approach and the
benefits of preprocessing and the divide-and-conquer scheme.


%


\section{Preprocessing}
\label{sec:prepr}

We describe reduction rules that simplify an instance
of MWCS without losing optimality. We define three classes of
increasingly complex reduction rules and apply them
exhaustively in successive phases of a preprocessing scheme, see Figure~\ref{fig:preproc_scheme}. 

\begin{figure}[hbpt]
  \centering
  \includegraphics[width=\linewidth]{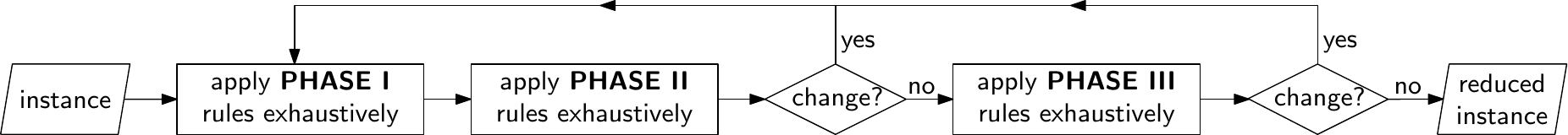}
  \caption{\textbf{Preprocessing scheme.} An MWCS instance passes through three phases of increasingly complex rules that are run exhaustively until no rules apply anymore. The result is a reduced instance. }
  \label{fig:preproc_scheme}
\end{figure}

The rules make use of three operations on node sets: \merge,
\isolate and \remove, see Figure~\ref{fig:operations}. Given a node set
$V'$, \merge$(V')$ combines the nodes in $V'$ into a supernode of
weight $\sum_{v \in V'}w(v)$, which is connected to all neighbors of
nodes in $V'$ outside $V'$. Operation \isolate($V'$) adds a copy of
$V'$ without edges and merges it. Operation \remove($V'$) removes all
nodes in $V'$ from the graph. We keep a mapping from the merged nodes
to sets of original nodes to map solutions of the reduced instance to
solutions of the original instance. These operations will also be used
in our divide-and-conquer scheme, which we will present in
Section~\ref{sec:divide-and-conquer-scheme}. 

\begin{figure}[hbt]
  \centering
  \includegraphics[width=\linewidth]{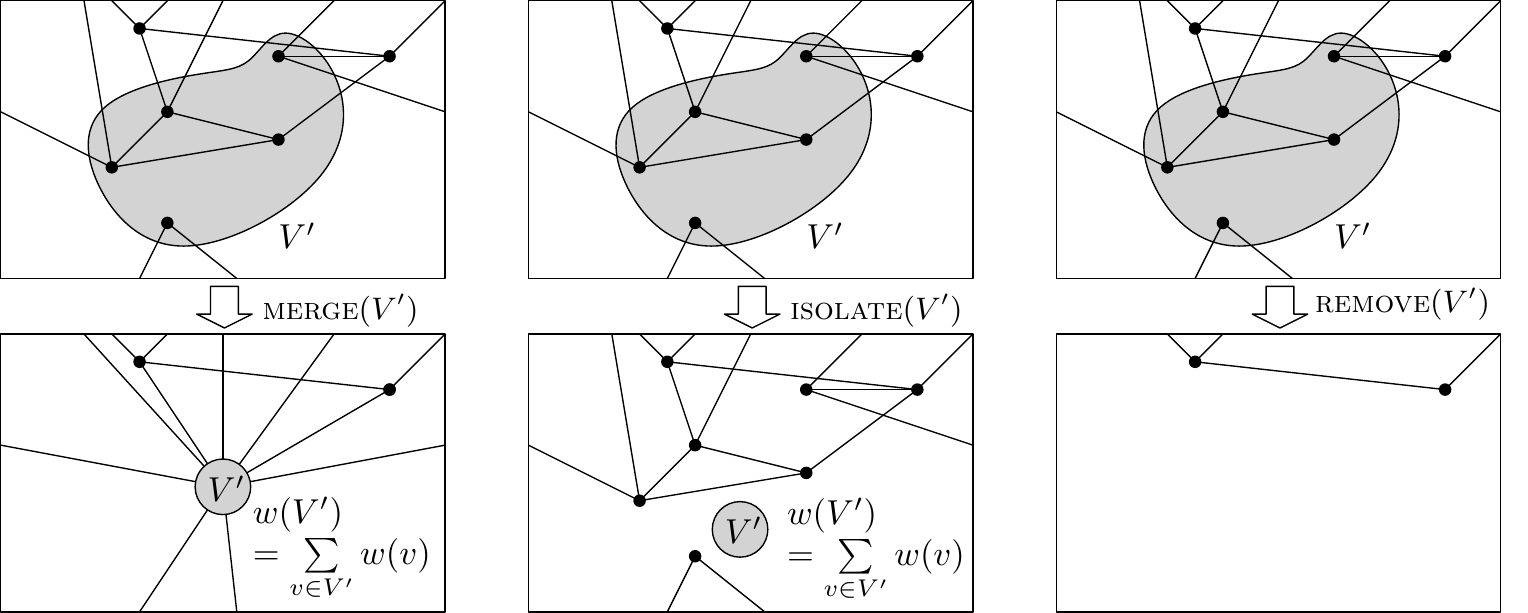}
  \caption{\textbf{Operations} \textsc{Merge}, \textsc{Isolate}, \textsc{Remove}.}
  \label{fig:operations}
\end{figure}

\begin{itemize}
\item \textbf{Phase I rules.} The first phase consists of three
simple rules. 
\begin{enumerate}
\item \emph{Remove isolated negative node rule.} Let $v$ be an isolated vertex with $w(v) < 0$. We can safely
  remove $v$ by calling \remove($\{v\}$), because it will
  never be part of any optimal solution. Identifying all nodes that satisfy the
  condition takes $O(|V|)$ time.
\item \emph{Merge adjacent positive nodes rule.} Let $(u, v)$ be an edge with $w(u) > 0$ and $w(v) > 0$. If one
  vertex will be part of the solution the other one will be as well,
  so we perform \merge($\{u, v\}$). Finding all adjacent positively-weighted
  nodes takes $O(|E|)$ time.
\item \emph{Merge negative chain rule.} Let $P$ be a chain of negative degree 2 vertices. It is safe to
  perform \merge($P$). Either none of the vertices in $P$ will be part
  of an optimal solution or all of them. In the latter case $P$ is used as a
  bridge between positive parts. Identifying all negatively-weighted chains
  takes $O(|E|)$ time.
\end{enumerate}
\item \textbf{Phase II rules.} The second phase consists of one rule. 
  \begin{enumerate}
    \item \emph{Mirrored hubs rule.} Let $u, v \in V$ be two distinct
      negatively-weighted nodes, i.e, $w(u) < 0$ and $w(v) < 0$. Without loss of
      generality assume that $w(u) \leq w(v)$. If $u$ and $v$ are adjacent to
      the same nodes then we can \remove($\{u\}$). The reason is that $v$ will always be
      preferred over $u$ in an optimal solution, because it is adjacent to
      exactly the same nodes as $u$ and costs less. Finding all pairs of
      negatively-weighted mirrored nodes takes $O(\Delta \cdot |V|^2 )$ time where $\Delta$
      is the maximum degree of the graph.
  \end{enumerate}
\item \textbf{Phase III rule.} The last phase consists of the most
  expensive rule. 
  \begin{enumerate}
  \item \emph{Least-cost rule.} This rule is adapted from the
    least-cost test, which was described by Duin and Volgenant
    \cite{Duin:1987wo} for the node-weighted Steiner tree problem. Let $(u, v)$
    and $(v, w)$ be two edges in the graph, and let $v$ have degree 2
    and $w(v) < 0$. We construct a directed graph whose node set is
    $V$ and whose arc set $A$ is obtained by introducing for every edge $(a,b)$ in $G$
    two oppositely directed arcs $ab$ and $ba$. We can \remove($\{v\}$), if the shortest path from
    $u$ to $w$ with respect to lengths $d(ab) := \max\{-w(b), 0\}$ for
    all $ab \in A$ is shorter than $-w(v)$. The reason is that if $u$
    and $w$ were to be in an optimal solution there is a better way
    than using $v$. This rule takes $O(|V'| \cdot (|E| + |V| \log
    |V|))$ time where $V'$ is the set of all negative-weighted nodes having
    degree 2.
  \end{enumerate}
\end{itemize}



\section{Divide-and-Conquer Scheme}
\label{sec:divide-and-conquer-scheme}

We propose a three-layer divide-and-conquer scheme for solving MWCS to
provable optimality. It is based on 
decomposing the input graph into its connected, biconnected and
triconnected components. H\"uffner et al.\ have already considered data
reduction rules based on heuristically found separators of size $k$
for the Balanced Subgraph problem \cite{Hueffner:2010ka}. Here, we
present the first data reduction approach that considers all separators of size 1 and 2 in a rigorous
manner by processing them using the block-cut and SPQR tree data structures. 

In the first layer, we consider the connected components of the input $(G, w)$
one-by-one, see Algorithm~\ref{alg:solve_mwcs}. In the next layer, we
construct a block-cut vertex tree $\TB$ for each connected component $C$.
We process the block leaves $B$ of $\TB$ iteratively. Processing a block $B$
of degree 1 will result in the removal of $B
\setminus \{c\}$, where $c$ is the corresponding cut vertex. In addition, a new
degree 0 node may be introduced. Processing a block $B$
of degree 0 will result in the replacement of $B$ by
a single isolated
node. Therefore, at the end of the loop, the graph $G[C]$ will only
consist of isolated nodes. Among these nodes, the node with maximum weight
corresponds to the maximum weight connected subgraph of $G[C]$. We retain only
this node in the graph, and remove all other nodes in $C$. After processing
all connected components, a similar situation arises in $G$: each component
is an isolated node, and the solution $V^*$ will
correspond to the node that has maximum weight.

\begin{algorithm2e}[btp]
\ForEach{connected component $C$ of $G$}
{
  \preprocess($C$)\;
  let $\TB$ be the block-cut vertex tree of $C$\;
  \While{$T_B$ has block $B$ of degree 0 or 1}
  {
    \processbicomponent($B$)\;
    update $T_B$\;
  }
  $V_C =$ \solveunrooted($C$)\;
  \merge($V_C$); \remove($C \setminus V_c$)\; 
}
$V^*$ $\gets$ \solveunrooted($G$)\;
\Return $V^*$\;
\caption{\textsc{SolveMWCS}($G = (V, E), w$)}
\label{alg:solve_mwcs}
\end{algorithm2e}  

Next, we describe how to process a block $B$. The idea here is
to account for the situation where the final optimal solution $V^*$ contains
parts of $B$, i.e.\ $V^* \cap B \neq \emptyset$. For this to happen, either $V^*$
must be a proper subset of $B$, or a cut node of $B$ must be part of $V^*$. Since $B$
corresponds to a degree 0 or 1 block in $\TB$, it contains at most one cut node
$c$. Let us consider the case where $B$ does have a cut node $c$, as the other
case is straightforwardly resolved by introducing an isolated node. Two subcases can be distinguished: $c \in V^* \cap B$ and
$c \not \in V^* \cap B$. We encode both
cases using the following gadget.
Let $V_1$ be the unrooted maximum-weight connected subgraph of
$G[B]$, and let $V_2$ be the maximum-weight connected subgraph of $G[B]$
rooted at $c$. The corresponding gadget $\Gamma_1$ is obtained by merging the nodes in
$V_2$, and, if $V_1 \neq V_2$, by additionally introducing an isolated vertex
corresponding to $V_1$---see Figure~\ref{fig:scheme_layer1and2}~D and~E. Replacing $B$ by the gadget preserves optimality as
stated in the following lemma.

\begin{lemma}
  Let $B \subseteq V$ be a block in $G = (V, E)$ containing exactly one cut node
  $c$. Let $G' = G[(V \setminus B) \cup \Gamma_1]$ be the graph where 
  $B$ is replaced by gadget $\Gamma_1$. A maximum weight connected subgraph
  of $G'[U^*]$ has the same weight as a maximum weight connected subgraph
  $G[V^*]$, i.e., $w(U^*) = w(V^*)$.
  \label{lem:bi}
\end{lemma}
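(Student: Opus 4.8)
The plan is to establish the equality $w(U^*) = w(V^*)$ through two opposing inequalities, after first pinning down the one structural fact the argument really needs. Because $c$ is the \emph{unique} cut node contained in $B$, the set $B \setminus \{c\}$ shares no vertex with any other block, and therefore every edge of $G$ joining $B$ to $V \setminus B$ is incident to $c$. Consequently, in the gadget the supernode $s$ obtained by merging $V_2$ (note that $c \in V_2$, since $V_2$ is rooted at $c$) is adjacent in $G'$ to exactly the neighbours of $c$ lying outside $B$; intuitively, $s$ takes over the role of $c$ as the sole interface between the block and the remainder of the graph.

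For the direction $w(U^*) \ge w(V^*)$, I would take an optimal connected subgraph $G[V^*]$ and split into cases according to how $V^*$ meets $B$. If $c \notin V^*$, then—since every $B$-to-outside edge runs through $c$—the set $V^*$ cannot simultaneously contain interior vertices of $B$ and vertices of $V \setminus B$; hence either $V^* \subseteq B \setminus \{c\}$, giving $w(V^*) \le w(V_1)$ and reproducible in $G'$ by the isolated $V_1$-vertex (or by $s$ itself when $V_1 = V_2$), or $V^* \subseteq V \setminus B$, which survives unchanged in $G'$. If $c \in V^*$, I would first argue that $G[V^* \cap B]$ is connected: any component of $V^* \cap B$ avoiding $c$ would have no edge to the rest of $V^*$ and would contradict connectivity of $G[V^*]$. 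Thus $V^* \cap B$ is a connected subgraph of $G[B]$ containing $c$, so $w(V^* \cap B) \le w(V_2)$. Replacing $V^* \cap B$ by the supernode $s$ yields $U = (V^* \setminus B) \cup \{s\}$; the cross edges of $V^*$ all attach to $c$ and are inherited by $s$, so $G'[U]$ is connected and $w(U) = w(V^* \setminus B) + w(V_2) \ge w(V^*)$.

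The reverse inequality $w(V^*) \ge w(U^*)$ is symmetric: given an optimal $G'[U^*]$, I expand the gadget back. If $U^*$ contains the isolated vertex it must equal that singleton, mapping to the connected set $V_1 \subseteq B$ of the same weight; if $U^*$ contains $s$, I replace $s$ by $V_2$, whose connectedness together with the restored $c$-incident edges keeps the preimage connected while preserving weight; and if $U^*$ avoids both gadget vertices it already lies in $V \setminus B$. In every case the preimage is a connected subgraph of $G$ of weight $w(U^*)$, so $w(V^*) \ge w(U^*)$.

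The two inequalities give $w(U^*) = w(V^*)$. The main obstacle, and the step I would write most carefully, is the connectivity bookkeeping: proving that $G[V^* \cap B]$ is connected whenever $c \in V^*$, and checking that contracting $V^* \cap B$ to $s$—and, in reverse, expanding $s$ to $V_2$—neither creates nor destroys connectivity. Both rely entirely on the observation that $c$ is the sole gateway between $B$ and $V \setminus B$, so once that fact is fixed the case analysis becomes routine.
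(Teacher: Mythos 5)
Your proof is correct and takes essentially the same route as the paper's: both establish $w(U^*)=w(V^*)$ via two inequalities, with the same case analysis on whether the optimal solution meets $B$ and whether it contains $c$ (respectively the merged supernode or the isolated $V_1$-vertex). You merely spell out two facts the paper leaves implicit, namely that $c$ is the sole gateway between $B$ and $V\setminus B$ and that $G[V^*\cap B]$ is connected whenever $c\in V^*$.
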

\begin{proof}
  The gadget $\Gamma_1$ consists of two parts $V_1$ and $V_2$, which correspond
  to the unrooted and $\{c\}$-rooted maximum weight connected subgraph of $G[B]$,
  respectively. By definition $V_1$ and $V_2$ induce connected subgraphs in $G$.
  Therefore the operations \merge($V_2$) and \isolate($V_1$)---resulting in the
  construction of $\Gamma_1$---combined with the optimality of $V^*$ ensure that
  $w(U^*) \leq w(V^*)$.

  We now distinguish two subcases: $V^* \cap B = \emptyset$ and $V^* \cap B \neq
  \emptyset$. Consider the first case. Since the introduction of the gadget only
  concerns nodes in $B$, we have that $w(U^*) \geq w(V^*)$. Hence, $w(U^*) =
  w(V^*)$.

  In the other case, $V^* \cap B \neq \emptyset$, we either have
  that $c \not \in V^* \cap B$ or $c \in V^* \cap B$. 
  If $c \not \in V^* \cap B$ then $V^* \subseteq B$.
  By construction of the gadget, we then have $w(V_1) = w(V^*
  \cap B) = w(V^*)$. Conversely, if $c \in V^* \cap B$ then $w(V_2) = w(V^*
  \cap B)$. Observe that $w(U^* \setminus \Gamma_1) = w(V^* \setminus B)$.
  Therefore $w(U^*) = w(V^*)$. \qed
\end{proof}

\begin{figure}[hbt]
  \centering
  \includegraphics[width=\linewidth]{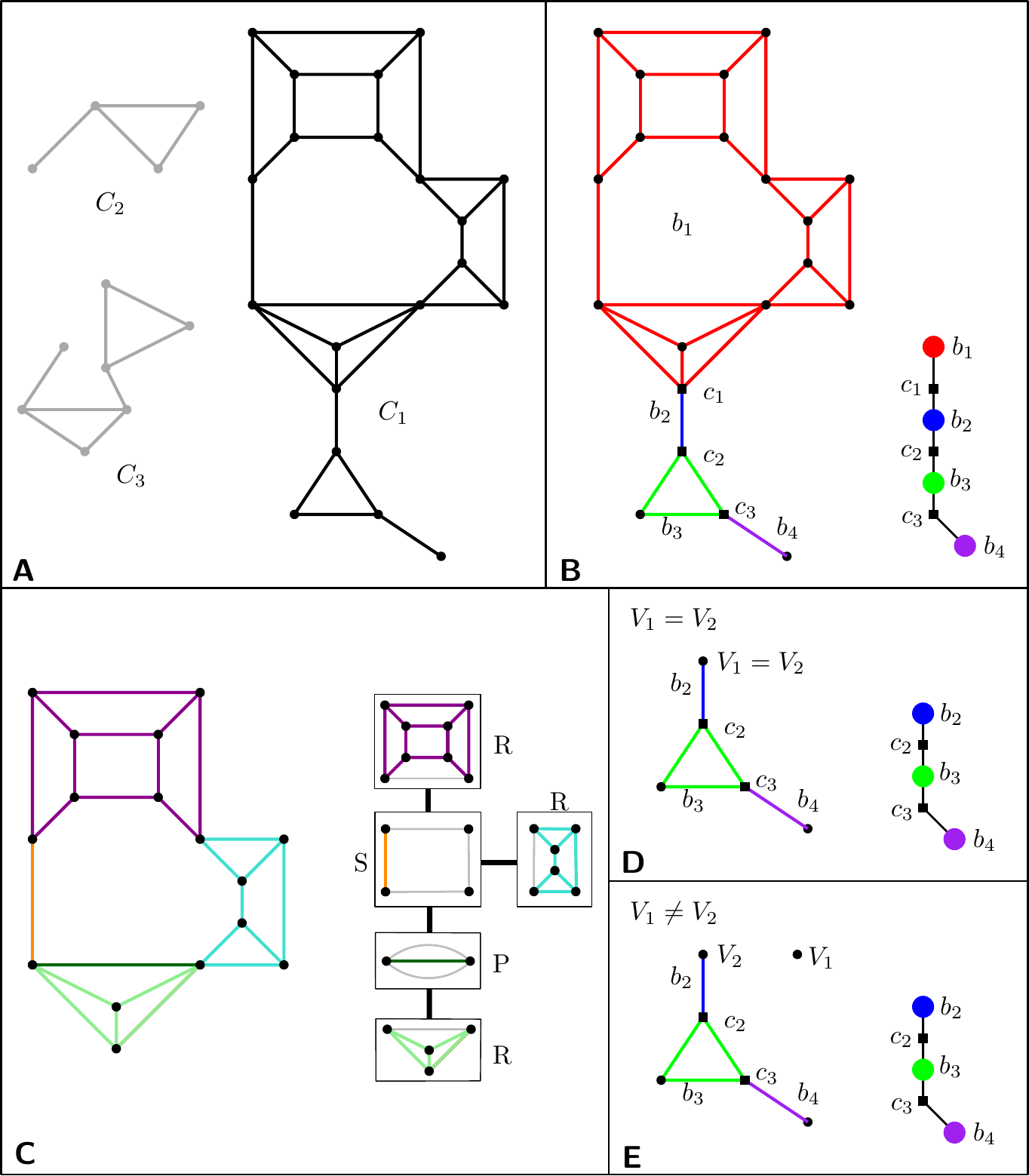}
  \caption{\textbf{The three layers of the divide-and-conquer scheme.}
    \textbf{A:} Three connected components of an MWCS
    instance. \textbf{B:} Biconnected components and the block-cut
    vertex tree of connected component $C_1$. \textbf{C:} Triconnected components
  and the SPQR tree of biconnected component $b_1$. \textbf{D:} Gadget
  $\Gamma_1$ in the first case. \textbf{E:} Gadget $\Gamma_1$ in the
  second case.}
  \label{fig:scheme_layer1and2}
\end{figure}

As an optimization, we preemptively remove a leaf block $B$ if all its nodes $v
\in B \setminus \{c\}$ have nonpositive weights $w(v) \leq 0$. 

In the third layer, we start by constructing an SPQR-tree $\TSPQR$ of $B$. We
then iteratively consider each triconnected component $A$ that does not contain
the cut node $c$ and contains at least three nodes. 
Let $\{u,v\}$ be the cut pair of such a triconnected component $A$. If $A$ consists of only negatively weighted nodes, its only
purpose is to connect $u$ with $v$. To find the cheapest way of doing this, we
construct a directed graph whose node set is $A$ and whose arcs are obtained by
introducing for every edge $(a,b)$ in $G[A]$ two oppositely directed arcs. We
define the cost of an arc $(a,b)$ to be $-w(b)$. The cheapest way of going from
$u$ to $v$ now corresponds to the shortest path from $u$ to $v$ in the directed
graph. Triconnected components that contain positively-weighted nodes are
processed separately and may be replaced by gadgets of smaller size, which we
describe next.

\begin{procedure}[hbtp]
\caption{\textsc{ProcessBicomponent}($B$)}
let $c$ be the corresponding cut node, if applicable\;
\lIf{all $v$ in $B \setminus \{c\}$ have $w(v) \leq 0$}{\remove($B \setminus
  \{c\}$)}
\Else
{
  let $\TSPQR$ be the SPQR tree of $B$\;
  \ForEach{triconnected component $A$ of size $> 3$ not containing $c$}
  {
    let $\{u, v\}$ be the cut pair of $A$\;
    \If{all $v$ in $A$ have $w(v) \leq 0$}
    {
      compute shortest path $P$ from $u$ to $v$\;
      \merge($P \setminus \{u,v\}$); \remove($N \setminus P$)
    }
    \Else{\processtricomponent($A$)}
    \preprocess($B$); update $\TSPQR$\;
  }
  $V_1$ $\gets$ \solveunrooted($B$)\;
  $V_2$ $\gets$ \solverooted($B, \{c\})$\;
  \lIf{$V_1 = V_2$}
  {
    \merge($V_2$); \remove($B \setminus V_2$)
  }
  \lElse
  {
    \isolate($V_1$); \merge($V_2$); \remove($B \setminus V_2$) 
  }
}

\end{procedure}

Let us consider the situation where the final solution $V^*$ contains parts of a
triconnected component $A$ with cut nodes $\{u,v\}$, i.e., $V^* \cap A \neq
\emptyset$. We can distinguish four cases: (i) $u \in V^*$, (ii) $v \in V^*$,
(iii) $\{u,v\} \subseteq V^*$, and (iv) $V^* \subseteq A$. In the following we
introduce a gadget $\Gamma_2$ that encodes all four cases. 
The first three cases correspond to finding a rooted maximum weighted connected subgraph in $G[A]$
with $\{u\}$, $\{v\}$ and $\{u,v\}$ as the root node sets, respectively. 
Let $V_1$, $V_2$, $V_3$ be the solutions sets of the three rooted maximum
weight connected problems from which the respective root nodes have been
removed. The fourth case corresponds to finding an unrooted maximum weight connected subgraph
in $G[A]$ whose solution we denote by $V_4$. To encode the fourth case, we
\isolate set $V_4$. As for the first three cases, we \merge the sets $V_1
\setminus V_2$, $V_2 \setminus V_1$, $V_1 \cap V_2$ and $V_3 \setminus (V_1 \cup
V_2)$ resulting in the nodes $v_1$, $v_2$, $v_3$ and $v_4$, respectively. As some
of these sets may be empty, we need to take care when connecting the gadget.
For instance, if $V_1 \setminus V_2 = \emptyset$ and $V_1 \cap V_2 \neq
\emptyset$ then we need to connect $u$ directly with $v_3$. Also, we ensure that we do not break biconnectivity. For instance, if $V_1
\cap V_2 = \emptyset$ and $V_1 \neq \emptyset$ then we merge $v_1$ and $u$ as to
prevent $u$ from becoming an articulation point. See Figure~\ref{fig:tri_gadget} and the
pseudocode below for more details.

\begin{procedure}[H]
\caption{\textsc{ProcessTriComponent}($A$)}
let $\{u, v\}$ be the cut pair\;
$V_1$ $\gets$ \solverooted($A$, $\{u\}$) $\setminus \{u\}$\;
$V_2$ $\gets$ \solverooted($A$, $\{v\}$) $\setminus \{v\}$\;
$V_3$ $\gets$ \solverooted($A$, $\{u, v\}$) $\setminus \{u, v\}$\;
$V_4$ $\gets$ \solveunrooted($A$)\;
\isolate($V_4$)\;
$\Gamma_2 \gets \{u,v\}$\;
\lIf{$V_1 \setminus V_2 \neq \emptyset$}
{
  $v_1$ $\gets$ \merge($V_1 \setminus V_2$);
  add edge $(u, v_1)$;
  add $v_1$ to $\Gamma_2$
}
\lIf{$V_2 \setminus V_1 \neq \emptyset$}
{
  $v_2$ $\gets$ \merge($V_2 \setminus V_1$);
  add edge $(v, v_2)$;
  add $v_2$ to $\Gamma_2$
}
\If{$V_1 \cap V_2 = \emptyset$}
{
  \lIf{$V_1 \neq \emptyset$}{\merge($\{u, v_1\}$); remove $v_1$ from $\Gamma_2$}
  \lIf{$V_2 \neq \emptyset$}{\merge($\{v, v_2\}$); remove $v_2$ from $\Gamma_2$}
}
\Else
{
  $v_3 \gets$ \merge($V_1 \cap V_2$); add $v_3$ to $\Gamma_2$\;
  \lIf{$V_1 \subseteq V_2$}{add edge $(u, v_3)$ \textbf{else} add edge $(v_1, v_3)$}
  \lIf{$V_2 \subseteq V_1$}{add edge $(v, v_3)$ \textbf{else} add edge $(v_2, v_3)$}
}
\If{$V_3 \setminus (V_1 \cup V_2) \neq \emptyset$}
{
  $v_4$ $\gets$ \merge($V_3 \setminus (V_1 \cup V_2)$)\;
  add $v_4$ to $\Gamma_2$\;
  add edges $(u, v_4), (v, v_4)$
}
\If{$V_1 \cap V_2 = \emptyset$ and $V_3 \setminus (V_1 \cup V_2) = \emptyset$}
{
  \lIf{$V_1 \setminus V_2 \neq \emptyset$}
  {
    add edge $(v_1, v)$
  }
  \lIf{$V_2 \setminus V_1 \neq \emptyset$}
  {
    add edge $(v_2, u)$
  }
}
\remove($A \setminus \Gamma_2$)
\end{procedure}

\begin{figure}[hbt]
  \centering
  \includegraphics[width=\linewidth]{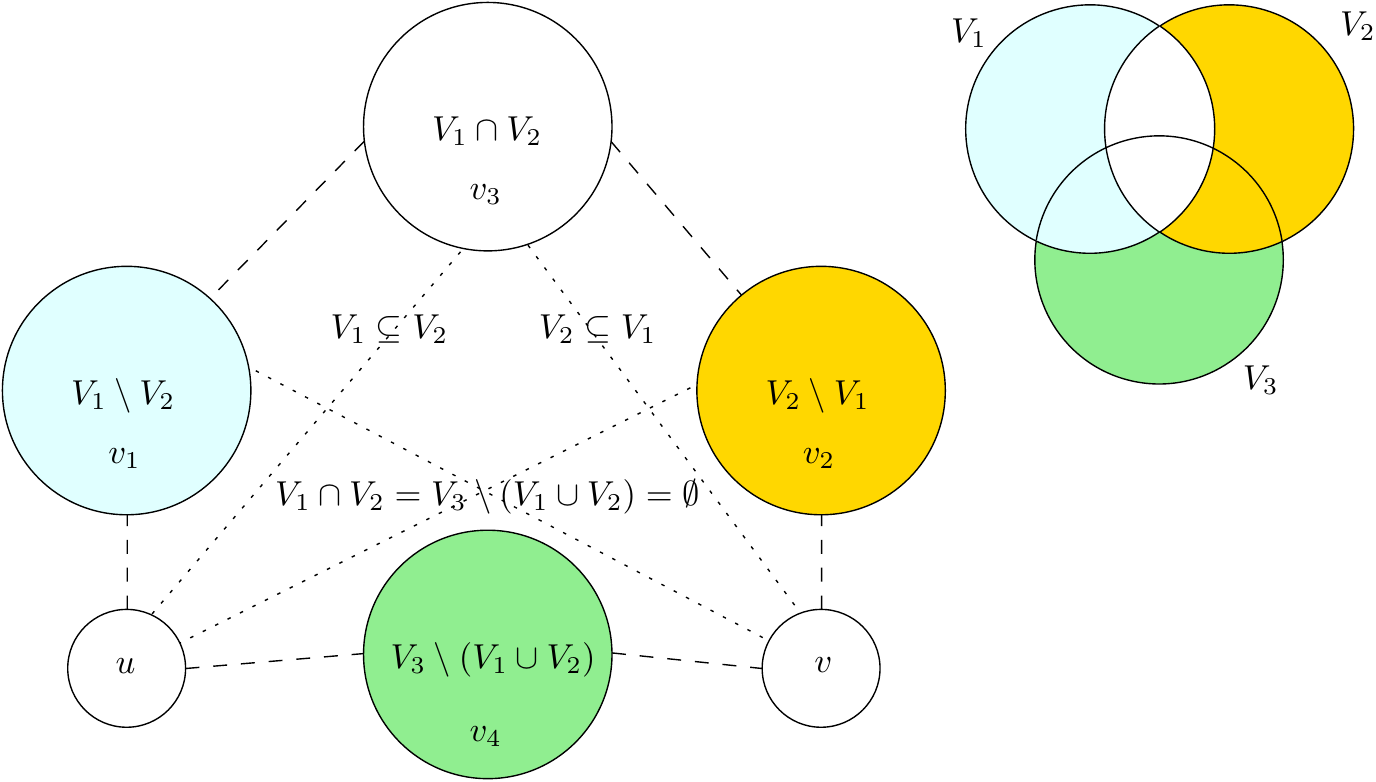}
  \caption{\textbf{Triconnected component gadget.} The gadget $\Gamma_2$
consists of at most four nodes. In case the corresponding set is empty no node
is introduced. The dotted edges are only introduced if the condition on the edge
is met, e.g., there is an edge from $u$ to $v_3$ if $V_1 \subseteq V_2$ and $V_1
\cap V_2 \neq \emptyset$.}
  \label{fig:tri_gadget}
\end{figure}

\begin{lemma}
  Let $A \subseteq V$ be a triconnected component in $G = (V, E)$ not containing
  any cut node of $G$. Let $G' = G[(V \setminus B) \cup \Gamma_2]$ be the graph
  where $A$ is replaced by gadget $\Gamma_2$. 
  A maximum weight connected subgraph
  of $G'[U^*]$ has the same weight as a maximum weight connected subgraph
  $G[V^*]$, i.e., $w(U^*) = w(V^*)$.
  \label{lem:tri}
\end{lemma}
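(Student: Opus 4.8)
The plan is to mirror the two-sided argument used for Lemma~\ref{lem:bi} and establish $w(U^*)=w(V^*)$ as two separate inequalities. For the direction $w(U^*)\le w(V^*)$, I would use that every node of $\Gamma_2$ is created by \merge or \isolate applied to a subset of $A$, so each supernode stands for a bundle of original nodes of equal total weight. Since $A$ is attached to the rest of $G$ only through the cut pair $\{u,v\}$ and $\Gamma_2$ retains exactly these two interface nodes, any solution of $G'$ leaves the external part untouched; it therefore suffices to show that replacing each selected supernode by its underlying node set yields a \emph{connected} subgraph of $G$ of the same weight. Optimality of $V^*$ then gives $w(U^*)\le w(V^*)$.

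For the reverse inequality $w(U^*)\ge w(V^*)$, I would reuse the four-way classification of $V^*\cap A$ introduced just before the lemma. If $V^*\cap A=\emptyset$ the solution survives verbatim. If $V^*\subseteq A$ then, by optimality of the unrooted solution, $w(V^*)\le w(V_4)$, and the isolated copy of $V_4$ realizes this weight in $G'$. In the remaining cases $V^*$ meets $A$ through $u$, through $v$, or through both, so $V^*\cap A$ is a connected subgraph rooted at $\{u\}$, $\{v\}$ or $\{u,v\}$ and its weight is bounded by $w(u)+w(V_1)$, $w(v)+w(V_2)$ or $w(u)+w(v)+w(V_3)$, respectively. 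In each case I would exhibit the matching selection of gadget nodes---$\{u,v_1,v_3\}$, $\{v,v_2,v_3\}$, or a selection assembled from $v_1,v_2,v_3,v_4$---together with $V^*\setminus A$, and check that it is connected in $G'$ and attains weight at least $w(V^*)$.

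I expect the main obstacle to lie in the connectivity bookkeeping behind the $\le$ direction. The sets $V_1\setminus V_2$, $V_2\setminus V_1$, $V_1\cap V_2$ and $V_3\setminus(V_1\cup V_2)$ that become the single supernodes $v_1,v_2,v_3,v_4$ need not individually induce connected subgraphs of $G[A]$; connectivity is recovered only after re-attaching $u$ and $v$ and following the gadget edges. Hence I must argue that whenever a connected selection in $G'$ picks some supernode, the gadget edges force enough neighbouring supernodes and cut nodes to be picked as well that the expansion becomes one of the connected subgraphs $\{u\}\cup V_1$, $\{v\}\cup V_2$, $\{u,v\}\cup V_3$ or $V_4$ (or a connected union of these), thereby ruling out spurious selections of strictly larger weight. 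This is exactly what the routing of $v_3$ through $v_1$ or $v_2$---chosen according to the subset relations $V_1\subseteq V_2$ and $V_2\subseteq V_1$---and the extra edges inserted when both $V_1\cap V_2=\emptyset$ and $V_3\setminus(V_1\cup V_2)=\emptyset$ are designed to enforce.

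It then remains to dispatch the degenerate sub-cases in which some of $V_1,V_2,V_3,V_4$ or their Boolean combinations are empty, following the conditionals of \textsc{ProcessTriComponent}: when $V_1\cap V_2=\emptyset$ the merges of $\{u,v_1\}$ and $\{v,v_2\}$ keep $u$ and $v$ from becoming spurious articulation points, and when a difference set is empty its supernode is simply absent and the incident gadget edges are redirected to the appropriate endpoint. Verifying that each branch still yields both inequalities is routine but tedious; once every branch is checked, combining the two inequalities gives $w(U^*)=w(V^*)$.
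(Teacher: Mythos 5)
Your plan follows essentially the same route as the paper: prove $w(U^*)\le w(V^*)$ from the fact that the gadget nodes are weight-preserving merges of subsets of $A$, and prove $w(U^*)\ge w(V^*)$ by the four-way case split on $V^*\cap\{u,v\}$ (plus the trivial case $V^*\cap A=\emptyset$), exhibiting $V_4$, $\{u\}\cup V_1$, $\{v\}\cup V_2$ or $\{u,v\}\cup V_3$ as witnesses. The one place you genuinely diverge is the $\le$ direction: the paper simply asserts that, because $A$ is triconnected, the sets $V_1\setminus V_2$, $V_2\setminus V_1$, $V_1\cap V_2$ and $V_3\setminus(V_1\cup V_2)$ are each connected in $G$, and lets the inequality follow from that; you (rightly, in my view) doubt this claim and instead propose to show that any \emph{connected selection of supernodes in $G'$}, expanded via the gadget wiring, lands on a connected union of the canonical solutions. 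That is arguably the more defensible argument, and it is precisely the point where the paper's own proof is thinnest. The caveat is that you state this step as an obligation rather than discharging it --- the case analysis over which supernodes a connected $U^*$ can contain, and why each admissible combination expands to a connected set of no greater weight than $V^*$, is exactly the hard part and is left as ``routine but tedious.'' As submitted this is a correct and well-structured plan whose central verification still needs to be written out.
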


\begin{proof}
  Let $\{u,v\}$ be the cut pair of $A$. The gadget $\Gamma_2$ encodes four node
  sets: $V_1$, $V_2$ and $V_3$ representing the rooted maximum weight connected
  subgraphs of $G[A]$---without their respective root nodes---rooted at $\{u\}$,
  $\{v\}$ and $\{u,v\}$, respectively;
  and $V_4$ representing the unrooted maximum weight connected subgraph of
  $G[A]$. Let $v_1 := \merge(V_1 \setminus V_2)$, $v_2 := \merge(V_2 \setminus
  V_1)$, $v_3 := V_1 \cap V_2$ and $v_4 := V_3 \setminus (V_1 \cup V_2)$---see
  Figure~\ref{fig:tri_gadget}.

  We start by proving $w(U^*) \leq w(V^*)$.
  Since $A$ is a triconnected component, we have that $V_1 \setminus V_2$,
  $V_2 \setminus V_1$, $V_1 \cap V_2$ and $V_3 \setminus (V_1 \cup V_2)$ are
  connected in $G$. In addition, as these node sets are obtained by \merge
  operations only and they are pairwise disjoint, we have that $w(U^*) \leq w(V^*)$.

  We distinguish two cases: $V^* \cap A = \emptyset$ and $V^* \cap A \neq
  \emptyset$. The first case holds, because introduction of the gadget $\Gamma_2$ only concerns
  nodes in $A$. Therefore, $w(U^*) \geq w(V^*)$, which implies $w(U^*) =
  w(V^*)$. The second case, $V^* \cap A \neq \emptyset$, has the following four
  subcases:
  \begin{enumerate}
    \item $u \not \in V^*$ and $v \not \in V^*$;\\
      This implies that $V^*
      \subseteq B$. We then have $w(V_4) = w(V^* \cap A) = w(V^*)$.
    \item $u \in V^*$ and $v \not \in V^*$;\\
      By optimality of $V^*$, we have that
      $w(V_1 \cup \{u\}) = w(u) + w(v_1) + w(v_3) = w(V^* \cap A)$. Since $w(U^*) \leq w(V^*)$, it
      follows that $w(U^*) = w(V^*)$.
    \item $u \not \in V^*$ and $v \in V^*$;\\
      Symmetric to previous subcase.
    \item $u \in V^*$ and $v \in V^*$;\\
      There are two cases: $V_1 \cap V_2 = \emptyset$ or $V_1 \cap V_2 \neq
      \emptyset$. In the first case, we have that $w(V_3 \cup \{u,v\}) = w(u) + w(v) + w(v_1)
      + w(v_2) + w(v_4) = w(V^* \cap A)$.
      In the second case, we have that $w(V_3 \cup \{u,v\}) = w(u) + w(v) + w(v_1) + w(v_2) +
      w(v_3) = w(V^* \cap A)$.
      Since $w(U^*) \leq w(V^*)$, it follows in both cases that $w(U^*) =
      w(V^*)$. \qed
  \end{enumerate}
\end{proof}

Lemmas~\ref{lem:bi} and~\ref{lem:tri} imply the correctness of our
divide-and-conquer scheme.

\begin{theorem}
  Given an instance of MWCS Algorithm~\ref{alg:solve_mwcs} returns an
  optimal solution.
\end{theorem}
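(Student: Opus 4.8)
The plan is to argue by a single global invariant: every reduction the algorithm performs preserves the weight of the optimum \emph{within each connected component}, after which the final answer is simply read off from the fully reduced graph. Concretely, I would prove the invariant that after any sequence of \preprocess, \processbicomponent and \processtricomponent calls applied to a connected component $C$, the maximum weight connected subgraph of the current reduced version of $C$ has the same weight as the maximum weight connected subgraph of the original $G[C]$.

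First I would dispatch the outer loop. Because a connected subgraph is always contained in a single connected component, the optimum of $G$ equals $\max_C w^*(G[C])$, where $w^*$ denotes optimal weight. Hence it suffices to produce, for each component $C$, one representative isolated node whose weight equals $w^*(G[C])$, and then let the concluding \solveunrooted($G$) pick the heaviest such node; on a graph consisting only of isolated nodes the optimum is trivially the single maximum-weight node, so this last step is correct once each component has been collapsed.

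Next I would establish the per-component invariant by induction on the number of reduction steps. The base case is the \preprocess call, which preserves optimality by the rules of Section~\ref{sec:prepr}. For the inductive step I would split on the operation applied. A \processtricomponent call replaces a triconnected component $A$ not containing the block's cut node $c$; since internal nodes of a block are not articulation points of $G$, the cut pair $\{u,v\}$ of $A$ consists of non-cut nodes, so the hypothesis of Lemma~\ref{lem:tri} is met and the component optimum is preserved (the shortest-path replacement for an all-negative $A$ is the degenerate case of the same argument). A \processbicomponent call first reduces $B$ internally through repeated \processtricomponent calls on its SPQR tree and then replaces $B$ by the gadget $\Gamma_1$; applying Lemma~\ref{lem:bi} to the reduced block shows the component optimum is unchanged. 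Each such step strictly shrinks $\TSPQR$ or $\TB$, so the loops terminate, leaving $G[C]$ as isolated nodes; the assignment $V_C = $ \solveunrooted($C$) then returns a maximum-weight node and \merge($V_C$), \remove($C \setminus V_C$) collapse $C$ to the single representative required above.

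The main obstacle, and the part I would treat most carefully, is the \emph{composition} of the gadget replacements: I must check that after replacing one triconnected (resp.\ biconnected) component the updated SPQR (resp.\ block-cut) tree is again valid, and that the freshly introduced gadget nodes do not create new articulation structure that would violate the hypotheses of Lemmas~\ref{lem:bi} and~\ref{lem:tri} on later iterations. In particular I would verify that restricting processing to triconnected components that do \emph{not} contain $c$, together with the biconnectivity-preserving merges in \processtricomponent (for example, merging $v_1$ with $u$ when $V_1 \cap V_2 = \emptyset$), guarantees that $c$ remains the unique interface between $B$ and the rest of $G$ throughout. This makes the order of processing immaterial and allows the two lemmas to be applied repeatedly, which is exactly what the induction needs.
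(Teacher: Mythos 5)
Your proposal is correct and follows essentially the same route as the paper, which proves the theorem simply by invoking Lemmas~\ref{lem:bi} and~\ref{lem:tri} (together with the correctness of the preprocessing rules); your write-up is a careful elaboration of that same argument, adding the per-component invariant, the induction over reduction steps, and the composition/termination details that the paper leaves implicit.
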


%

\section{Branch-and-Cut Algorithm}
\label{sec:branch_and_cut}

To solve the nontrivial instances within our divide-and-conquer
scheme, we use a branch-and-cut approach. We obtain strong upper
bounds from solving the linear programming (LP) relaxation of an integer
linear programming formulation and lower bounds from an
integrated primal heuristics that is guided by the optimal solution of
the LP relaxation. 

\subsection{Integer linear programming formulation}

We use a formulation that only used node variables for
both the unrooted and the rooted MWCS problem. The formulations are
equivalent to the generalized node-separator formulation described in
\cite{AlvarezMiranda:2013ew}. 

\subsubsection{Unrooted.} Variables $\mathbf{x} \in \{0,1\}^V$ encode the
presence of a node in the solution. To encode connectivity in the unrooted case,
we use auxiliary variables $\mathbf{y} \in \{0,1\}^V$ that encode the presence
of the root node. The ILP is as follows.
\begin{alignat}{3}
\label{eq:obj}  \max\: & \sum_{v \in V} w_v x_v \\
\label{eq:sumy} & \sum_{v \in V} y_v = 1 & \\
\label{eq:y}    & y_v \leq x_v & \forall v \in V\\
\label{eq:cut}  & x_v \leq \sum_{u \in \delta(S)} x_u + \sum_{u \in S} y_u
                  \quad\quad & \forall v \in V, \{v\} \subseteq S \subseteq{V}\\
\label{eq:vars_x} & x_v \in \{0,1\} & \forall v \in V\\
\label{eq:vars_y} & y_v \in \{0,1\} & \forall v \in V
\end{alignat}
Constraint~\eqref{eq:sumy} states that there is exactly one root node. A node
can only be the root node if it is present in the solution, which is captured by
constraints~\eqref{eq:y}. Constraints \eqref{eq:cut} state that a node $v$ can
only be present in the solution if for all sets $S$ containing $v$, either the
root node is in $S$, or a node in the set $\delta(S) = \{ v \in V \setminus
S \mid \exists u \in S: (u,v) \in E \}$ is in the solution. In the next
subsection we describe how we separate these constraints.

To strengthen the formulation, we use the following additional cuts.
\begin{alignat}{3}
\label{eq:posy} & y_v = 0 & \forall v \in V, w(v) < 0\\
\label{eq:symmetry} & \sum_{v > u} y_v \leq 1 - x_u & \forall u \in V, w(u) > 0\\
\label{eq:neg}  & x_v \leq x_u & \forall(u,v) \in E, w(u) > 0, w(v) < 0\\
\label{eq:get_in_get_out} & 2 \cdot x_v \leq \sum_{u \in \delta(v)} x_u & \forall v \in V, w(v) < 0\\
\label{eq:cut_easy}     & x_v \leq y_v + \sum_{u \in \delta(v)} x_u \quad\quad & \forall v \in V
\end{alignat}
In \eqref{eq:posy} we require the root node to have a strictly positive weight.
We use symmetry breaking constraints \eqref{eq:symmetry} to force the node with
the smallest index to be the root node. Constraints \eqref{eq:neg} state that a
negatively-weighted node can only be in the solution if all its adjacent
positively-weighted nodes are in the solution. In addition, the presence of a
node with negative weight in the solution implies that at least two of its
neighbors must be in the solution, which is modeled by constraints
\eqref{eq:get_in_get_out}. Constraints~\eqref{eq:cut_easy} are implied by
\eqref{eq:cut} in the case that $|S| = 1$. Adding these constraints results in a
tighter upper bound in the initial node of the branch-and-bound tree.

\subsubsection{Rooted.} The rooted formulation is as follows.
\begin{alignat}{3}
\label{eq:obj_rooted}  \max\: & \sum_{v \in V} w_v x_v \\
\label{eq:roots} & x_r = 1 & \forall r \in R\\
\label{eq:cut_rooted}  & x_v \leq \sum_{u \in \delta(S)} x_u \quad\quad &
\forall r \in R, v \in V \setminus R, \{v\} \subseteq S \subseteq{V \setminus
  \{r\}}\\
\label{eq:vars_x_rooted} & x_v \in \{0,1\} & \forall v \in V
\end{alignat}
Constraints \eqref{eq:roots} enforce the presence of root nodes in the solution.
The cut constraints \eqref{eq:cut_rooted} state that a node $v \in V
\setminus R$ can only be in the solution if for any root $r \in R$ and for all supersets
$S \subseteq V \setminus \{r\}$ of $v$ it holds that a node in the set $\delta(S)$ is in the solution.

We strengthen the formulation using the following cuts.
\begin{alignat}{3}
\label{eq:neg_rooted}  & x_v \leq x_u & \forall(u,v) \in E, w(u) > 0, w(v) < 0\\
\label{eq:cut_easy_rooted}  & x_v \leq \sum_{u \in \delta(v)} x_u \quad\quad &
\forall v \in V \setminus R
\end{alignat}
Constraints \eqref{eq:neg_rooted} are the same as constraints \eqref{eq:neg} for
the unrooted case. Similarly to the unrooted formulation,
constraints~\eqref{eq:cut_easy_rooted} correspond to manually adding cuts
for the case that $|S| = 1$ in \eqref{eq:cut_rooted}.

\subsection{Separation}

\subsubsection{Unrooted.} Similarly to \cite{AlvarezMiranda:2013ew}, the separation problem in the unrooted
formulation corresponds to a minimum cut problem on an auxiliary directed
support graph $D$ defined as follows: each node $v \in V$ corresponds to
an arc $(v_1, v_2)$, and each edge $(u,v) \in E$ corresponds to two arcs
$(u_2, v_1)$ and $(v_2, u_1)$. In addition, an artificial root node $r$ is
introduced as well as arcs $(r,v_1)$ for all $v \in V$. Given a fractional
solution $(\mathbf{\bar{x}},\mathbf{\bar{y}})$, the arc capacities $c$ are set as
follows: $c(r,v_1) = \bar{y}_v$, $c(v_1,v_2) = \bar{x}_v$ and $c(v_2,u_1) = 1$
for all distinct $u,v \in V$. Given a node $v \in V$, we identify violated
constraints by solving a minimum cut problem from $r$ to $v_2$. Let $C$ be a
minimum cut set from $r$ to $v_2$. In case the cut value $c(C)$ is smaller than
$\bar{x}_v$, the cut set will admit a set $S$ and $\delta(S)$ such that
$\bar{x}_v > \bar{x}(\delta(S)) + \bar{y}(S) = c(C)$. We add such violated
constraints to the formulation and resolve again.

\subsubsection{Rooted.} For the rooted formulation the auxiliary graph $D$ is
defined as follows: each node $v \in V \setminus R$ corresponds to
an arc $(v_1, v_2)$, and each edge $(u,v) \in E$ corresponds to two arcs
$(u_2, v_1)$ and $(v_2, u_1)$ if both $u$ and $v$ not in $R$. For each root node
$r \in R$, a single node is introduced in $D$. Edges $(r,v)$ incident to a root node $r
\in R$ where $v \not \in R$ correspond to an arc $(r,v_1)$. We identify violated
constraints by identifying minimum cuts between $r$ and $v_2$ for all $r \in R$
and $v \in V \setminus \{r\}$.


\subsection{Primal heuristic}

As stated in Section~\ref{sec:introduction}, MWCS is solvable in polynomial time
for graphs of bounded treewidth. In fact, for trees R-MWCS is solvable in
linear time by first rooting the tree at a node $r \in R$ and then solving a dynamic
program based on the recurrence:
\begin{equation*}
  M(v) = w(v) + \sum_{u \in \delta^+(v) \setminus R} \max\{M(u), 0\} + \sum_{u
  \in \delta^+(v) \cap R} M(u),
\end{equation*}
where $\delta^+(u)$ are the children of the node $u$.

Our primal heuristic transforms the input graph into a tree by considering the
fractional values $\mathbf{\bar{x}}$ given by the solution of the LP relaxation. We use these values to assign an edge cost
$c(u,v) = 2 - (\bar{x}_u + \bar{x}_v)$ for each edge $(u,v) \in E$. Next, we
compute a minimum-cost spanning tree using Kruskal's
algorithm~\cite{Kruskaljoin:1956kc}. In the unrooted MWCS case, we root the
spanning tree at every positively-weighted node $r$ and assign the solution with
maximum weight to be the primal solution. This leads to running time $O(|V|^2)$. In the R-MWCS case, we only root the spanning tree once at an
arbitrary vertex $r \in R$, resulting in running time $O(|V|)$.

\subsection{Implementation details}

Since CPLEX version 12.3, there is a distinction between the user cut callback
and the lazy constraint callback. The latter is only called for integral
solutions, see Figure~\ref{fig:cplex}. Separation of \eqref{eq:cut}
in the case of integral $(\mathbf{\bar{x}}, \mathbf{\bar{y}})$ can be done by
considering the connected components of the induced subgraph
$G[\mathbf{\bar{x}}]$. Let $r$ be the root node encoded in $\mathbf{\bar{y}}$.
Recall that \eqref{eq:sumy} ensures that there is only one root node.
A connected component $C$ of $G[\mathbf{\bar{x}}]$ that does not contain $r$
corresponds to a violated constraint with $S := C$ and $\delta(S) := \delta(C)$.
Violated constraints for R-MWCS in the case of integrality can be separated
analogously.

\begin{figure}[hbpt]
  \centering
  \includegraphics[width=.6\linewidth]{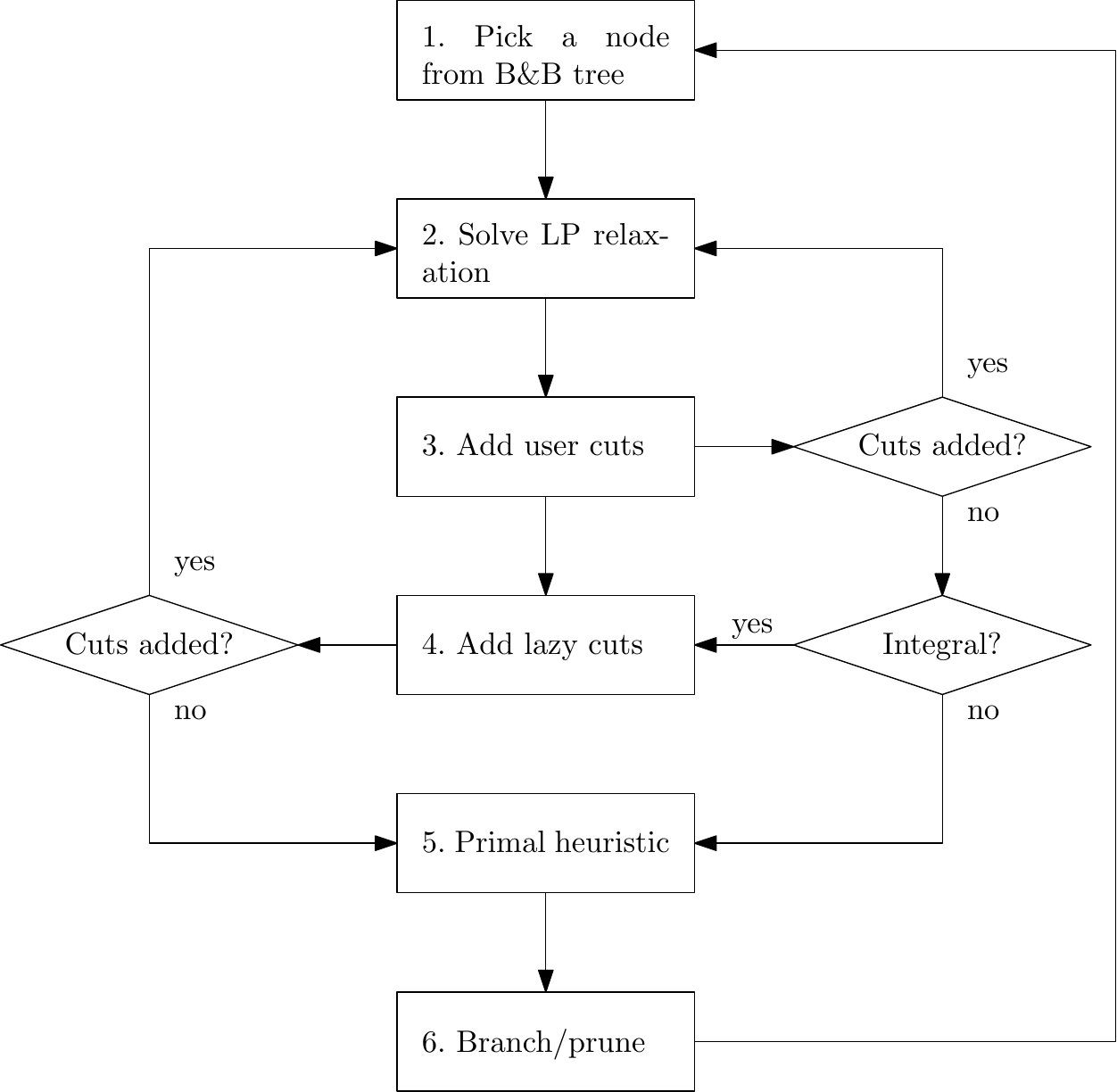}
  \caption{\textbf{CPLEX flow diagram.}}
  \label{fig:cplex}
\end{figure}

As can be seen in Figure~\ref{fig:cplex}, CPLEX calls the user cut callback at
every considered node in the branch-and-bound tree. To prevent spending too much time in the
separation and to allow more time for branching, we choose not to separate violated
constraints at every callback invocation. Instead we make use of a linear
back-off function with an initial waiting period of 1. Upon a successful
attempt, the waiting period is incremented by one, thereby gradually
decreasing the time spent in separating violated constraints.

\section{Results on DIMACS Benchmark}
\label{sec:results}

We implemented our algorithm in \CPP using the LEMON graph library
\cite{Dezso:2011vo}, the OGDF library \cite{Chimani:2011vp} for building the SPQR tree and the CPLEX
v12.6 library for implementing the branch-and-cut approach. Our software tool is called
\heinztwo and is available for download at
\url{http://software.cwi.nl/heinz}. The code of the \heinztwo software is
managed using github and publicly available under the MIT license at \url{https://github.com/ls-cwi/heinz}.

We ran all computational experiments on a 12 core Linux machine with a
2.26 GHz Intel Xeon Processor L5640 and 24~GB of RAM, using 2 threads per instance.
We used all MWCS instances from the 11th DIMACS Implementation
Challenge (\url{http://dimacs11.cs.princeton.edu}). These are the
\ACTMOD set of 8 instances from integrative network analysis in
systems biology and the \JMPALM set of 72 instances, which are based
on the random
Euclidean instances introduced in \cite{Johnson:2000vs}. We also
considered prize-collecting Steiner tree instances from the DIMACS
benchmark, transforming them to MWCS instances using the rule given in
Section~\ref{sec:introduction}. These are \JMP (34
instances), \CRR (80), \PUCNU (18), \isixhundertforty (100), \Hinstances (14),
\HTwoinstances (14) and
\RANDOM (68). In total we ran computational experiments on 408 
instances coming from different applications.

We ran three versions of \heinztwo: (i) A pure branch-and-cut
approach without preprocessing, to establish a baseline, (ii)
preprocessing followed by branch-and-cut, to evaluate the effects of
data reduction and (iii) the divide-and-conquer scheme described in
Section~\ref{sec:divide-and-conquer-scheme}, to evaluate the benefits of
the results described in this paper. To allow for a fair
comparison, we report only results on instances for which all three
methods found feasible solutions. This resulted in 271 instances. 
A full table of results for all these instances is in the appendix.

For each instance we recorded its size in terms of number of nodes and
edges, before and after preprocessing, the best upper and lower bounds
that could be found by each of the three methods within a time limit
of 6 hours wall time, the running time in wall time, as well as the
number of processed biconnected and triconnected components for the
divide-and-conquer scheme.

Figure~\ref{fig:effect_preprocessing} shows the effect of
preprocessing. We can observe that preprocessing is effective,
reducing more than half of the instances to at most 84\% of their original size. Some
instances can even be solved by preprocessing.
\begin{figure}[htpb]
  \centering
  \includegraphics[scale=.5]{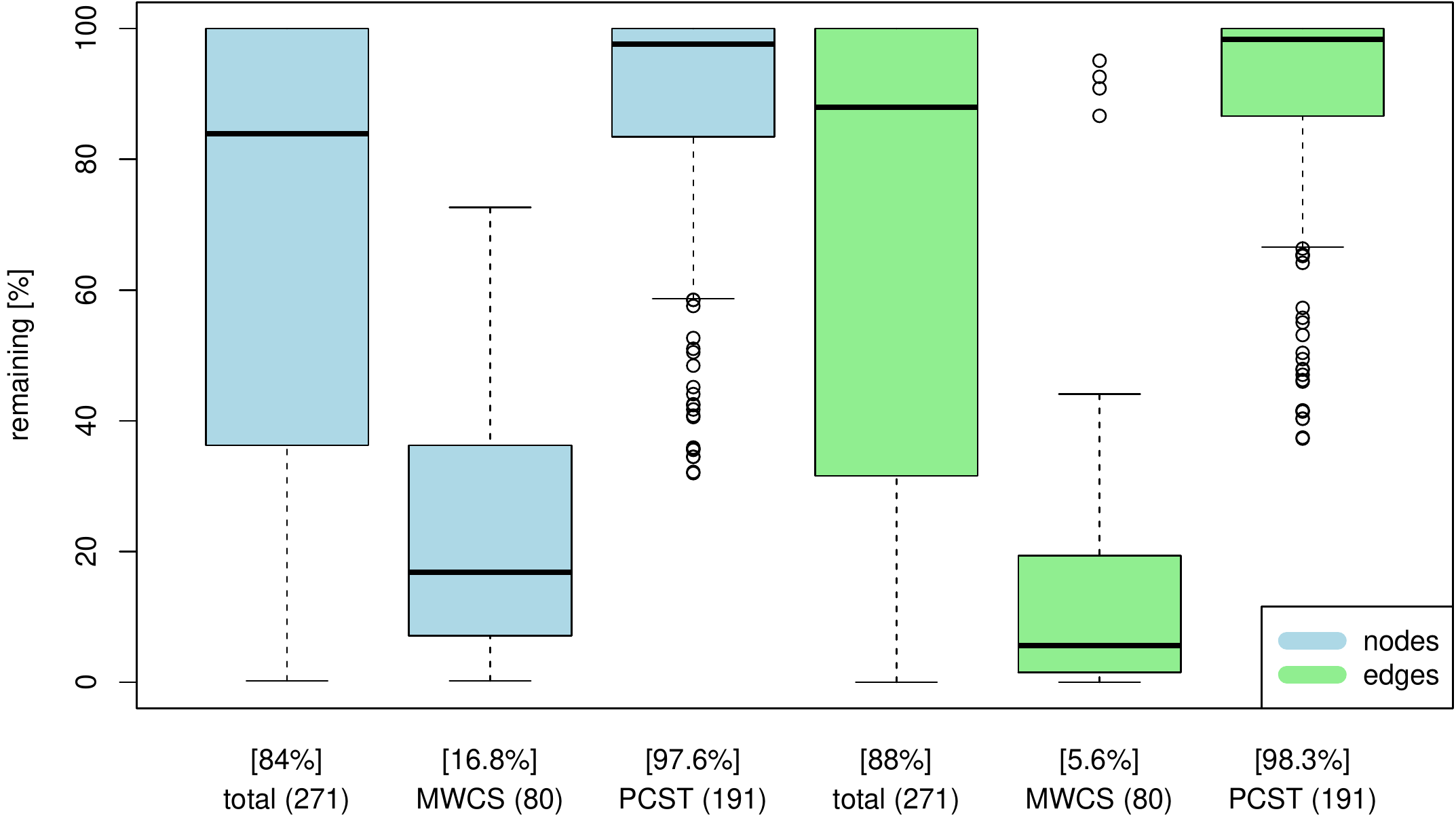}
  \caption{\textbf{Effect of preprocessing.} The boxplots show the
    reduction in number of nodes and edges after preprocessing as a
    fraction of the original value for the 271 instances.
    The median value is shown in between square brackets, and the
    number of instances is in between parentheses. }
  \label{fig:effect_preprocessing}
\end{figure}
Figure~\ref{fig:boxplot_gap} shows the distribution of the optimality
gap for the different version of \heinztwo. It can be seen that while
some instances are hard to solve, both preprocessing and the novel
divide-and-conquer scheme provide significant improvements. Also, it can be seen
that the PCST instances are harder than the MWCS instances for which all
three methods achieve a median gap of 0\% 
\begin{figure}[hbtp]
  \centering
  \includegraphics[width=\textwidth]{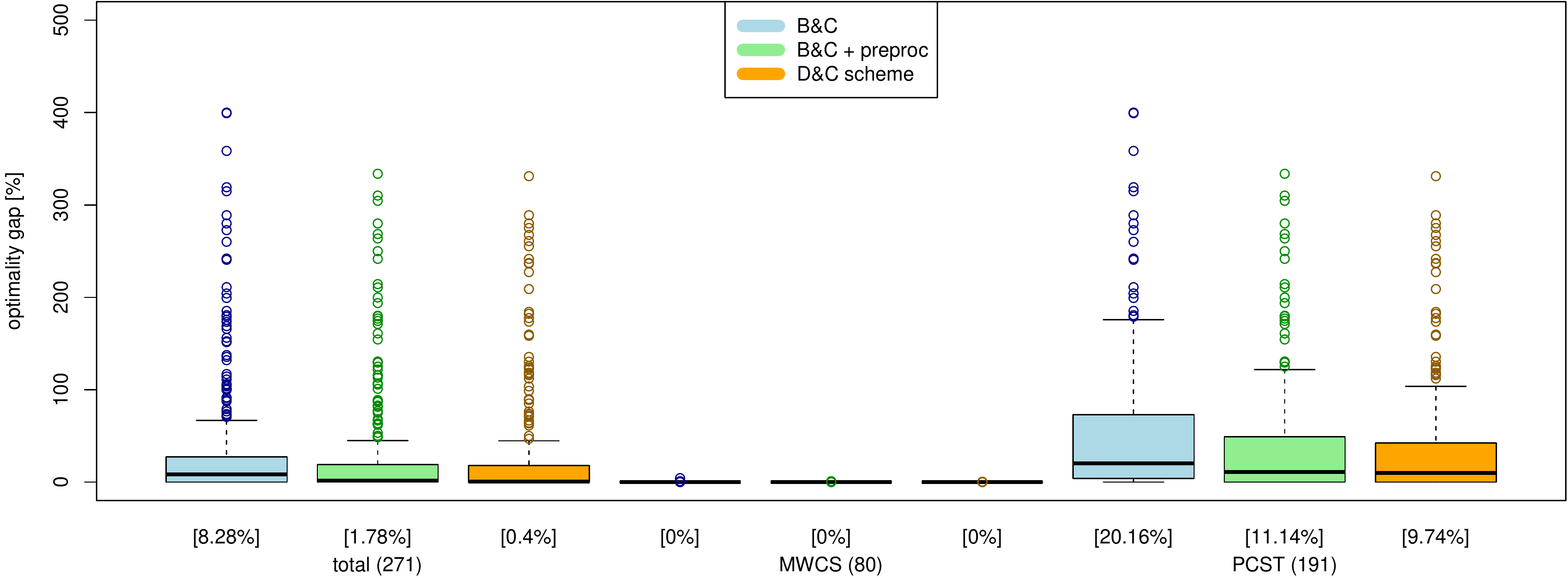}
  \caption{\textbf{Distribution of gaps.} Boxplots of optimality gap for the three different variants of \heinztwo. 
  The median value is shown in between square brackets, and the number
  of instances is in between parentheses.}
  \label{fig:boxplot_gap}
\end{figure}
Figure~\ref{fig:boxplot_time} shows the distribution of the running times of the
instances that were solved to optimality by all three methods. We can see that the divide-and-conquer
scheme (median running time of $0.5$~s) is
faster than the branch-and-cut approach without preprocessing (median running
time of $16.4$~s). On the MWCS instances, the branch-and-cut approach with
processing achieves the same median running time of $0.4$~s as the
divide-and-conquer scheme. For the PCST instances, however, the
divide-and-conquer scheme has the lowest median running time ($3.3$~s).
Moreover, the number of instances that were solved to optimality is the highest
for the divide-and-conquer scheme (134), followed by the branch-and-cut approach
with preprocessing (129) and the branch-and-cut approach without preprocessing
(97).
\begin{figure}[hbtp]
  \centering
  \includegraphics[width=\textwidth]{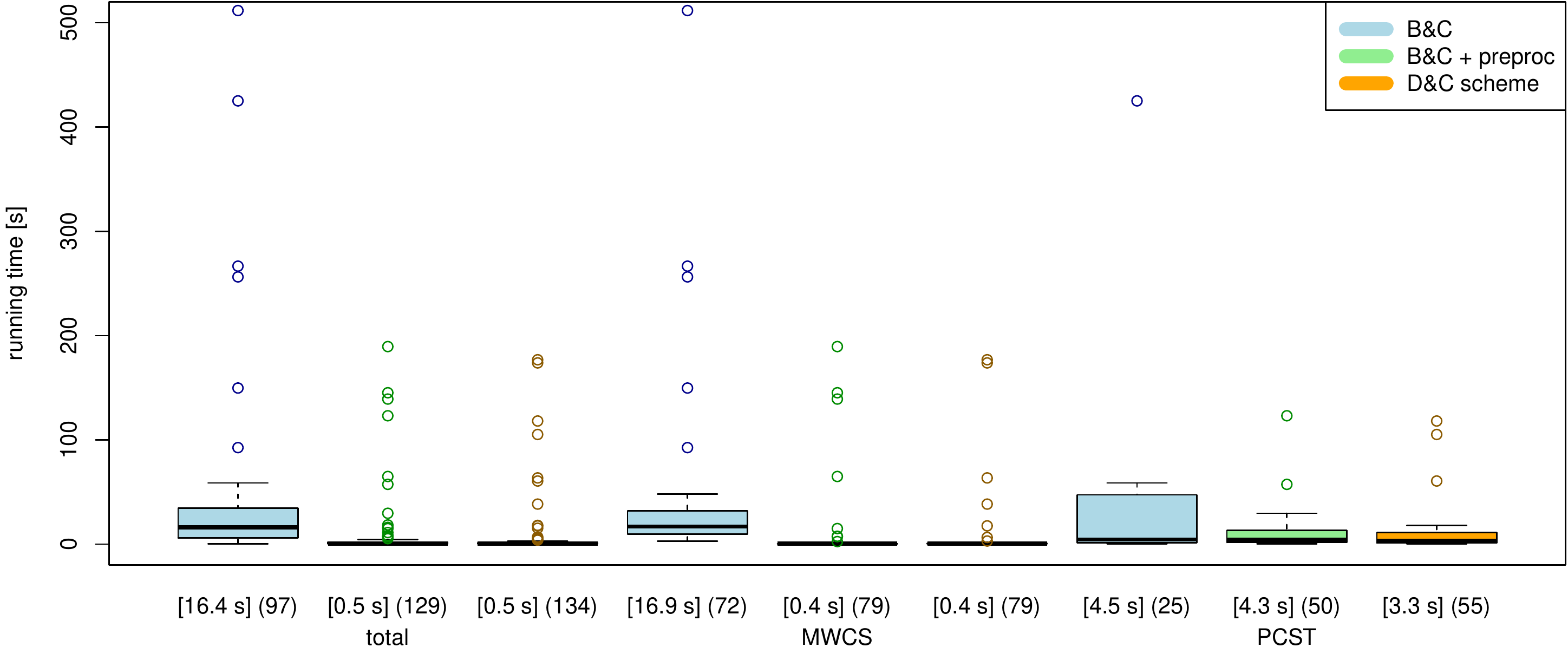}
  \caption{\textbf{Distribution of running times.} Boxplots of running time (s) of the
  instances solved to optimality by all three methods within the time limit.
  The median value is shown in between square brackets, and for each method the total number of instances 
  that it solved to optimality is in between parentheses.}
  \label{fig:boxplot_time}
\end{figure}

\section{Conclusions}
\label{sec:conclusions}

We have presented a divide-and-conquer scheme for solving the
maximum-weight connected subgraph problem to provable optimality. The
scheme combines effective preprocessing with a novel
decomposition approach
that divides an instance into biconnected and triconnected components and
solves the core pieces of an instance using branch-and-cut. We have demonstrated the
performance of our scheme on the benchmark instances of the 11th
DIMACS Implementation Challenge. 

The scheme is modular and allows for the integration of
new preprocessing rules or alternative exact algorithms to solve the
core instances.  We plan, for example, to evaluate a branch-and-cut
approach based on an edge-based ILP formulation, which is similar to
the one we used for the prize-collecting Steiner tree problem in
\cite{lwpkmf:pcst:2006}. Also, we plan to implement an FPT algorithm that can be
plugged into the scheme. The modularity of our approach will make it possible to
perform extensive algorithm engineering studies and to improve upon
the results presented in this paper.

We also want to stress that the divide-and-conquer approach is not
specific to MWCS, but also applicable to other types of Steiner
problems in graphs. Vice versa, techniques that have been proven
useful for related problems may be beneficial for solving MWCS, and we
will evaluate their integration into our scheme.




\small
\paragraph*{Acknowledgments}
We thank the participants of the March 2014 NII Shonan Meeting
\emph{Towards the ground truth: Exact algorithms for bioinformatics
  research} and, in particular, Christian Komusiewicz and Falk H\"uffner
for helpful comments. 

\bibliographystyle{splncs03}
\bibliography{MWCS_DIMACS}
\appendix

\begin{landscape}
\section{Detailed Results}

\small
The following table lists the results of the instances for which all three
methods found feasible solutions. The divide-and-conquer scheme is abbreviated
as `dc', the branch-and-cut approach with preprocessing is `no-dc' and the
branch-and-cut approach without preprocessing is `no-pre'. The time is in
seconds. For results by method `dc', the last three columns correspond to (from
left-to-right) the number of considered blocks, the number of considered
triconnected components with at least one nonnegative node and the number of
considered triconnected components that only contain negative nodes. For
`no-dc', the last three columns correspond to number of nodes after
preprocessing, number of edges after preprocessing and number of components
after preprocessing. For results by `no-pre', the last three columns are empty.

\tiny
\pgfplotstabletypeset[col sep=comma,
                      string type,
                      row sep=\\,
                      begin table=\begin{longtable},
                      end table=\end{longtable},
                      every head row/.append style={before row=\hline,after row=\hline\endhead},
                     ]{results_appendix.csv}
\end{landscape}

\end{document}